\documentclass[conference]{IEEEtran}
\usepackage{bm}
\usepackage{graphicx}
\usepackage{latexsym}
\usepackage{amsmath}
\usepackage{amsthm} 
\newenvironment{aquote}[1]{%
  \pushQED{#1}%
  \begin{quote}
}{%
  \par\nointerlineskip\noindent\hfill(\popQED)%
  \end{quote}%
}

\usepackage{txfonts}
\usepackage[bookmarks=false]{hyperref}
\usepackage{stfloats}
\usepackage{algorithm}
\usepackage{algorithmic}

\hyphenation{op-tical net-works semi-conduc-tor}
\pagestyle{plain}
\pagenumbering{arabic}
\begin{document}

\title{Online Assignment Algorithms for Dynamic Bipartite Graphs}

\author{\IEEEauthorblockN{Ankur Sahai}
\IEEEauthorblockA{Department of Computer Science,}
\IEEEauthorblockA{Indian Institute of Technology, Kanpur.}
Email: ankur.sahai@gmail.com}

\hyphenpenalty=100000
\setlength{\parindent}{10pt}
\setlength{\parskip}{1ex} 
\maketitle

\begin{abstract}
This paper analyzes the problem of assigning weights to edges incrementally in a dynamic complete bipartite graph consisting of producer and consumer nodes. The objective is to minimize the overall cost while satisfying certain constraints. The cost and constraints are functions of attributes of the edges, nodes and online service requests. Novelty of this work is that it models real-time distributed resource allocation using an approach to solve this theoretical problem.

This paper studies variants of this assignment problem where the edges, producers and consumers can disappear and reappear or their attributes can change over time. Primal-Dual algorithms are used for solving these problems and their competitive ratios are evaluated.
\end{abstract}

\section{Motivation}
\label{sec:motivation}
As more and more data moves to the cloud every day, it becomes important to analyze automated resource management schemes based on online algorithms that make the best use of the available storage while guaranteeing optimal performance to the users. This paper studies the theoretical aspects of the problem of allocating storage to VMs optimally.

The VMs running in a distributed system can be considered as the producers of I/O demands and data-centers as the consumers and this configuration can be visualized as a bipartite graph. This graph is complete as the I/O demand generated by any VM can be assigned to any of the data-centers and vice versa. The capacity of data-centers is equivalent to the capacity of the consumers. The average time per I/O operation or latency between a VM and a data-center can be compared with an edge distance. Failures of edges, consumers and producers can be likened to network link outage, failure of data-center / Storage Area Network (SAN) and VMs respectively and this contributes to the dynamic nature of the bipartite graph.

Allocation of storage demands generated by VMs to data-centers forms the assignment problem in the dynamic bipartite graph. As the I/O demands arrive incrementally and they need to be satisfied instantly, this forms the online part of the problem. Change in edge distances with time, can be visualized as being caused by a mobile user. The \emph{assignment without reallocation} constraint in problem \ref{sec:problem-2} that prevents reallocation of weights; simulates the practical limitation involved in moving large amounts of data across data-centers within a short period of time. It is to be noted that the available capacity of data-centers is a non-increasing function of time.

This paper presents three points of novelty:
\begin{enumerate}
\item Provides a unique theoretical perspective to resource allocation in distributed systems.
\item Analyzes the scenarios where properties of users, resources or the network link between them can change over time.
\item Extends this theoretical approach to mobile users.
\end{enumerate}

\section{Problem Definition}
\label{sec:problem-definition}

This paper aims to analyze online algorithms \ref{sec:online-algorithms} for dynamically evolving undirected graphs \cite{rand-dyn-graphs,emp-dyn-graph-algos,multiflow-dyn-graph-algos}. Given a complete bipartite graph $G = (V, E)$ where, ${V}$ is a finite set of nodes which consists of producers $i \in P$ and consumers $j \in C$ such that, ${V} = {P} \cup {C}$ and edges $e_{ij} \in E$ with distances $d_{ij}$, where $E = \{e_{ij} \; | \; i \in P, j \in C\}$.

A sequence of online \emph{service requests} $R = R(t), R(t+1), R(t+2), \cdots$ that are received as input specify either {\bf(a)} consumer demands {\bf(b)} failure / restoration of edges, producers and consumers {\bf(c)} changes in their attributes. This characterizes the dynamic nature of the bipartite graph. Consumer demands act by either changing the edge weights $w_{ij}(t)$ or removing an edge $e_{ij}(t) = 0$. For simplicity, this paper assumes that each producer generates atmost one demand $R_i$ throughout its lifetime and that a unique service request is generated at each time instances $t$. T is the set of all instances at which the online service requests are received $T = \cup_i t_i$.

Find an $\alpha$-competitive online algorithm \ref{sec:online-algorithms} for satisfying the service requests ${R}$ that minimizes the distance-weighted sum of edge weights:

\begin{equation}
\sum_{i \in P, j \in C} w_{ij}(t)  \cdot  d_{ij}(t)   \cdot  e_{ij}(t) \le \alpha  \cdot  OPT(t), \forall t \in T
\label{operation 5}
\end{equation} 
where, $\alpha$ is a constant and OPT(t) is the output of the optimal offline algorithm for the input received in the time interval $[0, t], t \in T$. Such that, 

\begin{equation}
\sum_{j \in C} w_{ij}(t) = R_i, \forall i \in P
\label{operation 6}
\end{equation}
Equation (\ref{operation 6}) guarantees that demands generated by producers are satisfied. This will be referred to as \emph{producer demand constraint}.

\begin{equation}
\sum_{i \in P} w_{ij}(t) \le M_{j}(t), \forall j \in C
\label{operation 7}
\end{equation}
Equation (\ref{operation 7}) ensures that consumer capacities at time $t \in T$ are not exceeded. This will be referred to as \emph{consumer capacity constraint}.

Dynamic nature of the edges is characterized by,
\begin{equation}
e_{ij}(t) = \left\{
\begin{array}{c l}
1 & \mbox{if edge exists, $i \in P$, $j \in C$ at time $t \in T$} \\
0 & \mbox{otherwise}
\end{array}
\right.
\label{operation 4}
\end{equation} 

This paper first analyzes the optimal offline algorithm and then focuses on online algorithms for solving this problem.

\subsection{Assignment without reallocation}
\label{sec:problem-2}
Consider a generalization of the problem in \ref{sec:problem-definition} where the weights assigned to edges cannot be reallocated (this is called \emph{assignment without reallocation} constraint). In this case the weights allocated to edges are a non-decreasing function of time $w_{ij}(t+1) \ge w_{ij}(t)$ except for edge failures when $w_{ij}(t+1)=0, \forall i \in P, j \in C$.

\subsection{Assignment with varying edge distances}
\label{sec:problem-3}
Consider a generalization of \emph{assignment without reallocation} in \ref{sec:problem-2} where edge distances can change over time $\exists (t,  \overline{t}) \in T, t \neq \overline{t} \; : \; d_{ij}(t) \neq d_{ij}(\overline{t})$.

\subsection{Assignment with node addition / failure or attribute changes}
\label{sec:problem-5}
Consider a generalization of \emph{assignment without reallocation} in \ref{sec:problem-2} with addition / failure of producers / consumers, $\exists (t, \overline{t}) \in T, t \neq \overline{t}, \; : C(t) \neq C(\overline{t}), \; P(t) \neq P(\overline{t})$. This paper assumes that when a consumer $j \in C$ fails the data stored on it is wiped off, $\sum_{i \in P}w_{ij} = 0$. This data is recreated by going through the demands generated earlier and is stored on a different consumer.

This section also considers a generalization of this problem where consumer capacities can change over time as specified by the service requests, $\exists (t, \overline{t}) \in T, \; t \neq \overline{t} \; :M_j(t) \neq M_j(\overline{t}), \; \forall j \in C$.

\subsection{Offline Assignment with multiple producer requests}
\label{sec:problem-6}
Consider a producer that generated multiple requests $R_{i1},R{i2},\cdots$.

\section{Related Work}

Papers that study other theoretical aspects of assigning resources to users incrementally are: online algorithms \ref{sec:online-algorithms} for the \emph{k}-server problems \cite{k-server-offline,k-server-lower-bounds,k-server-competitive,k-server-decomposition,k-server-survey,k-server-randomized}, min-flow \cite{min-flow-related}, online matching \cite{online-matching,online-weighted-bip-match}, dynamic assignment \cite{dynamic-assignment} bipartite network flow \cite{bip-network-flow} and assignment problem\cite{hungarian-algo, dynamic-hungarian-algo, incr-assign}. These are the most relevant results for online resource allocation problem \ref{sec:problem-definition}.

This problem \ref{sec:problem-definition} is also important because it can be used for distributed resource scheduling schemes such as the one used in VMware's virtualization framework \cite{vmware-drs, vmware-scale-storage}, Virtual Infrastructure using VirtualCenter,  a centralized distributed system and recently in VSphere, a cloud OS. The authors of VMware's Scalable Storage Performance white paper \cite{vmware-scale-storage} say:

\begin{aquote} {\cite{vmware-scale-storage}, page 2, para 1}
"Latency depends on many factors, including queue depth or capacity at various levels; I/O request size; disk properties such as rotational, seek, and access delays; SCSI reservations; and caching or prefetching algorithms."  
\end{aquote}

that the "latency" (time taken to complete I/O request which corresponds to cost of the objective function \ref{operation 5}) depends on "factors" that correspond to the attributes of the producers, consumers and network link considered by this paper \ref{sec:problem-definition}.

Distributed resource allocation \cite{job-assign-scalable,chromatic-sums,constraint-driven-scheduling}, fairness of resource allocation \cite{fairness-alloc,balanced-alloc} and dynamic load balancing \cite{related-load-balancing,online-load-balancing,hierearchical-load-balancing} issues have also been studied earlier. A survey of schemes for large scale cloud-computing platforms is presented in \cite{cloud-computing}.An analysis of the various distributed resource allocation techniques is presented in \cite{survey-distributed-computing}.

\section{Offline Algorithms}
The optimal offline algorithm has to exhaustively look at the available edges. This paper uses Linear Programming (LP) for solving the offline version due to ready availability of LP code \cite{lp-solve} that is used for verifying the output for different problem instances.

\subsection{LP}
\label{sec:offline-LP}
Linear Programming \cite{interior-point-methods,lp-onlline} is a method used to solve large-scale optimization problems with a set of constraints and an objective function (which has to be either minimized or maximized) both being linear.

The LP formulation for this problem is as follows,

Objective function:

Minimize:
\begin{equation}
\sum_{i \in P, j \in C} d_{ij}(t)  \cdot  w_{ij}(t), \; \; \; d_{ij}(t) \ge 0, \; w_{ij}(t) \ge 0
\label{operation 8}
\end{equation}

This LP is used for calculating the optimal solution OPT(t) for the input received in the time interval $[0, t], t \in T$. As the demands $R_i, i \in P$ are non-negative the weight assignments $w_{ij}$ are also non-negative. Edges that fail $e_{ij}(t) = 0$ (in \ref{operation 4}) have their corresponding $d_{ij}(t) = \infty$ so that, they are not selected.

Constraints:
\begin{equation}
\sum_{j \in C} w_{ij}(t) \ge R_i, \forall i \in P
\label{operation 9}
\end{equation}

Equation (\ref{operation 9}) represents the \emph{producer demand} constraint corresponding to (\ref{operation 6}).

\begin{equation}
\sum_{i \in P} w_{ij}(t) \le M_{j} \implies -\sum_{i \in P} w_{ij}(t) \ge - M_{j}, \; \forall j \in C
\label{operation 10}
\end{equation}

Equation (\ref{operation 10}) represents the \emph{consumer capacity} constraint corresponding to (\ref{operation 7}). In case the consumer capacity $M_j(t)$ changes at time $t \in T$ the corresponding \emph{consumer capacity} constraint is updated in the new LP formulation at time $t$

In addition to this, new constraints corresponding to the existing weight assignment on edges have to be added at each time instance $t \in T$.

\newtheorem{theorem}{Theorem}
\begin{theorem} {\bf (Correctness of LP \ref{sec:offline-LP})}
LP formulation in \ref{sec:offline-LP} produces a valid assignment of weights $w_{ij}$ on edges $e_{ij}$ corresponding to the demands R.
\end{theorem}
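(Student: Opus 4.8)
The plan is to unpack what ``valid assignment'' means and then check each ingredient against the LP in \ref{sec:offline-LP}. A weight vector $\{w_{ij}(t)\}$ is valid for the demands $R$ at time $t$ iff it is nonnegative, meets the \emph{producer demand constraint} (\ref{operation 6}) and the \emph{consumer capacity constraint} (\ref{operation 7}), and places no weight on failed edges, i.e. is consistent with (\ref{operation 4}). So the theorem reduces to showing (a) the LP is feasible, hence an optimal $\{w_{ij}(t)\}$ exists, and (b) an optimal solution satisfies (\ref{operation 6}), (\ref{operation 7}) and (\ref{operation 4}).

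For feasibility I would invoke the standing capacity hypothesis — the surviving consumers can absorb the demands received so far, $\sum_{i\in P} R_i \le \sum_{j\in C} M_j(t)$ over consumers reachable by non-failed edges — and exhibit an explicit witness: a water-filling assignment that routes each $R_i$ over the edges with $e_{ij}(t)=1$ in arbitrary order, never exceeding the residual $M_j(t)$. Because $G$ is complete bipartite this never gets stuck while some surviving consumer has room, so the resulting point satisfies (\ref{operation 9}), (\ref{operation 10}) and $w_{ij}\ge 0$, uses no failed edge, and hence has finite objective value; the LP therefore attains a finite optimum. The consumer capacity constraint (\ref{operation 7}) then holds at the optimum for free, since (\ref{operation 10}) is exactly its rewriting and nonnegativity is built into (\ref{operation 8}). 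Failed edges get zero weight because the convention $e_{ij}(t)=0 \Rightarrow d_{ij}(t)=\infty$ makes any solution with $w_{ij}(t)>0$ on such an edge infinitely costly, hence non-optimal, which is precisely (\ref{operation 4}).

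The one genuine gap to close is that the LP imposes only $\sum_{j\in C} w_{ij}(t) \ge R_i$ in (\ref{operation 9}), while (\ref{operation 6}) wants equality. I would show a tight optimal solution exists: if producer $i$ had strict slack at an optimum, decrease any $w_{ij}(t)>0$ incident to $i$ until the slack disappears; this keeps (\ref{operation 9}) for $i$, cannot break (\ref{operation 10}) for $j$ since that load only drops, preserves $w_{ij}\ge 0$, and does not raise the objective because $d_{ij}(t)\ge 0$ — so the modified point is still optimal and now tight at $i$. Iterating over all producers yields an optimal solution obeying (\ref{operation 6}) exactly; and when every $d_{ij}(t)>0$ on edges at $i$, the decrease strictly lowers the cost, so in fact every optimum is already tight there. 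Together (a) and (b) make the optimal LP solution a valid assignment, and the incremental step — at each $t\in T$ adjoining the constraints that record previously committed weights and the updated $M_j(t)$, then re-solving — preserves this provided the augmented system stays feasible, which is again just the capacity condition above.

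The main obstacle is bookkeeping rather than depth: stating the feasibility hypothesis and the $d_{ij}(t)=\infty$ device precisely (the paper treats both informally), and being careful that the equality-versus-inequality repair remains consistent with any extra no-reallocation constraints carried over from earlier time steps. Once those are pinned down, the argument is a short sequence of LP manipulations.
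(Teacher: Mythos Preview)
Your proposal is correct and in fact considerably more careful than the paper's own argument. The paper's proof is two sentences: it simply observes that (\ref{operation 9}) enforces the producer-demand requirement and (\ref{operation 10}) enforces the consumer-capacity requirement, and declares the result follows by definition. It does not discuss feasibility of the LP, does not mention the failed-edge convention $d_{ij}(t)=\infty$, and, notably, does not address the discrepancy you spotted between the inequality in (\ref{operation 9}) and the equality demanded by (\ref{operation 6}).

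So you are not taking a different route; you are taking the same route but actually paving it. Your tightening argument for (\ref{operation 6}) --- decreasing a positive $w_{ij}$ on a slack producer row, which preserves all other constraints and cannot raise the objective since $d_{ij}\ge 0$ --- is the standard fix and is sound. Your feasibility witness via water-filling is likewise fine under the stated capacity hypothesis. None of this is in the paper, but all of it is needed for the theorem as stated to be an honest claim; what each approach buys is that yours is a proof and the paper's is a sketch.
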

\begin{IEEEproof}
Equation (\ref{operation 9}) guarantees that the total demand generated by producers $i \in P$ is satisfied. Equation (\ref{operation 10}) ensures that the capacities of consumers $j \in C$ are not exceeded. By definition \ref{sec:problem-definition} this is a valid assignment of weight on edges.
\end{IEEEproof}

\begin{theorem} {\bf (Optimality of LP \ref{sec:offline-LP})}
LP formulation in \ref{sec:offline-LP} produces the optimal assignment of weights $w_{ij}$ on edges $e_{ij}$ corresponding to the demands in R.
\end{theorem}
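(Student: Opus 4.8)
The plan is to show that the linear program of \ref{sec:offline-LP} is an exact reformulation of the offline optimization problem of \ref{sec:problem-definition}, and then to invoke the standard guarantee that an LP solver returns a solution attaining the LP optimum. Concretely, I would proceed in three steps.

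First, I would establish that the feasible region of the LP coincides, in the relevant sense, with the set of valid weight assignments. By the preceding correctness theorem, every LP-feasible vector $w$ is a valid assignment; conversely, every valid assignment satisfies (\ref{operation 9}) and (\ref{operation 10}) and is therefore LP-feasible. The only gap is that (\ref{operation 9}) is written as $\sum_{j \in C} w_{ij}(t) \ge R_i$ whereas the problem statement (\ref{operation 6}) demands equality. Here I would argue: since $d_{ij}(t) \ge 0$, given any LP-feasible $w$ with $\sum_{j} w_{ij}(t) > R_i$ for some producer $i$, one can decrease the weights on $i$'s incident edges until that constraint is tight; this does not increase the objective (\ref{operation 8}) and only loosens the consumer constraints (\ref{operation 10}), so feasibility is preserved. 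Hence the LP attains its optimum at a point where every producer constraint holds with equality, i.e., at a genuine valid assignment.

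Second, I would check that the LP objective (\ref{operation 8}) equals the true cost (\ref{operation 5}) on valid assignments: failed edges with $e_{ij}(t)=0$ are modeled by $d_{ij}(t)=\infty$, which forces $w_{ij}(t)=0$ in any finite-cost solution, so the factor $e_{ij}(t)$ in (\ref{operation 5}) is redundant and the two expressions agree term by term. The extra constraints arising from \emph{assignment without reallocation} or from previously committed weights are themselves linear and are simply appended to the LP, so the same correspondence holds for those variants. Third, I would conclude by a two-sided comparison with $OPT(t)$: any feasible offline solution is an LP-feasible point, so $OPT(t)$ is at least the LP optimum; and by the first two steps the LP optimum is realized by a valid assignment, so $OPT(t)$ is at most the LP optimum. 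Therefore the LP optimum equals $OPT(t)$, and since the cited simplex or interior-point method returns a solution achieving the LP optimum, the assignment it outputs is optimal.

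The main obstacle is the first step — reconciling the inequality form of the producer-demand constraint in the LP with the equality form in the problem definition, and doing so without disturbing the consumer-capacity constraints. This is routine but is the one place where nonnegativity of the distances $d_{ij}(t)$ is essential; were the distances allowed to be negative, the LP relaxation and the original problem would genuinely differ, and the theorem as stated would fail.
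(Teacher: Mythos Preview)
Your proposal is correct and follows the same underlying logic as the paper---namely, that the LP is an exact reformulation of the offline problem, so an LP solver returns the true optimum. The paper's own proof is far terser: it invokes Theorem~1 for validity and then simply asserts optimality on the grounds that the objective is a minimization and fractional weights are permitted, without addressing the discrepancy you flag between the inequality constraint~(\ref{operation 9}) and the equality~(\ref{operation 6}). Your reconciliation of that gap via nonnegativity of the $d_{ij}$, and your explicit two-sided comparison with $OPT(t)$, are genuine improvements in rigor over the paper's treatment.
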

\begin{proof}
\emph{Theorem 1} ensures that this LP produces a valid solution. Since the objective (\ref{operation 8}) is a minimization function and fractional weights are allowed, it follows that the solution produced by LP is optimal.

For edge failures, the LP formulation \ref{sec:offline-LP} has to be modified by removing the failed edges, adding constraints for the current weight assignments and adding constraints for new set of demands until the next edge failure. 
\end{proof}

\subsection{Primal-Dual}
\label{sec:offline-PD}
Primal-Dual algorithms \cite{distributed-comb-opt} are used for a certain class of optimization problems that involve minimization or maximization of an objective function where there are a finite number of feasible solutions available at each step. These algorithms are based on constructing a dual which is solved in conjunction with the primal. It is used to derive intuitions about the nature of the solution that are implicit in the primal.

Consider the dual \cite{lp,algorithms} of the LP formulation in section \ref{sec:offline-LP}. Let $y_i$ be the dual variables corresponding to producers $i \in P$ (\ref{operation 9}) and $z_j$ be the dual variables corresponding to the consumers $j \in C$ (\ref{operation 10}) then the corresponding dual is,

Objective function:

Maximize:
\begin{equation}
\displaystyle \sum_{i \in P} y_i  \cdot  R_i - \sum_{j \in C} z_j  \cdot  M_j, \; y_i \ge 0, \; z_j \ge 0
\label{operation 11}
\end{equation} 

Constraints:
\begin{equation}
\displaystyle y_i - z_j \le d_{ij} , \forall i \in P, j \in C
\label{operation 12}
\end{equation}

Equation (\ref{operation 12}) suggests that the potential difference between producers and consumers can be atmost equal to $d_{ij}$. This will be referred to as dual \emph{potential limit constraint}.

Note that by complementary slackness conditions,
\begin{equation}
\displaystyle w_{ij} > 0 \iff y_i - z_j = d_{ij}
\label{operation 13}
\end{equation}

By complementary slackness, weights are allocated ($w_{ij} > 0$) on edges $e_{ij}, \; i \in P$, $j \in C$ if the potential difference between producer $y_i$ and consumer $z_j$ becomes equal to $d_{ij}$ and vice-versa by (\ref{operation 13}).

Let $T(y_i)$ be the set of tight constraints for $y_i$ such that,
\begin{equation}
\displaystyle T(y_i) = \{(i,j): y_i - z_j = d_{ij}\}
\label{operation 14}
\end{equation}

Let $S(y_i)$ be the set of slack constraints for $y_i$ such that,
\begin{equation}
\displaystyle S(y_i) = \{(i,j): y_i - z_j < d_{ij}\}
\label{operation 15}
\end{equation}

Consider an \emph{unit benefit} function of $y_i$ which measures the increase in dual objective \ref{operation 11},
\begin{equation}
\displaystyle B(y_i) = R_i - \sum_{j: (i,j) \in T(y_i)} M_j
\label{operation 17}
\end{equation}

\begin{algorithm}
\caption{Primal-Dual algorithm for \emph{Offline Assignment}}
\label{alg1}
\begin{algorithmic}
\STATE $y_i \gets 0, \forall i \in P$
\STATE $z_j \gets 0, \forall j \in C$
\STATE $T(y_i) \gets \emptyset$
\STATE $S(y_i) \gets \emptyset$
\WHILE{ $\exists y_i: B(y_i) \geq 0$}
\STATE $y_i: Max_{B(y_i)}$
\STATE $\delta_1 = \{ Min_{d_{ij} - (y_i - z_j)} \;| \; (i,j) \in S(y_i) \}$
\STATE $y_i \gets y_i + \delta_1$
\STATE $z_j \leftarrow z_j + \delta_1, \forall j: (i,j) \in T(y_i)$
\STATE $T(y_i) = \{(i,j) \; | \; y_i - z_j = d_{ij}, \forall j \in C\}$
\STATE  $\delta_2 = Min_{j \in C}(M_j - \sum_{i \in P} w_{ij: (i,j) \in T(y_i)})$
\STATE $w_{ij} \gets w_{ij} + \delta_2$
\ENDWHILE
\end{algorithmic}
\end{algorithm}

Initializing $y_i \gets 0, \forall i \in P$ and $z_j \gets 0, \forall j \in C$ produces a dual feasible solution as the dual \emph{potential limit constraint} (\ref{operation 12}) is satisfied. The primal-dual algorithm chooses the $y_i$ with the highest benefit function $B(y_i)$ at each step to maximize the increase in value of dual objective function. It then chooses the constraint that is closest to becoming tight and increases the value of $y_i$ by the amount that is needed to make this constraint tight, $\delta_1 =  Min_{(i,j) \in S(y_i)} (d_{ij} - (y_i - z_j))$.

For the set of constraints T that are already tight the corresponding $z_j$ are also increased by $\delta_1$ to maintain tightness. For the constraints that just became tight, the corresponding $w_{ij}$ are increased to the value of the least available capacity amongst all consumers,

\begin{equation}
Min_{j \in C}(M_j - \sum_{i \in P} w_{ij: (i,j) \in T(y_i)})
\label{operation 18}
\end{equation}

\begin{theorem}  {\bf (Optimality of Algorithm \ref{alg1})} The Primal-Dual Algorithm \ref{alg1} reaches the optimal solution for the \emph{assignment with reallocation} problem in section \ref{sec:problem-definition} when it is not possible to increase the potentials in the corresponding dual any further. 
\end{theorem}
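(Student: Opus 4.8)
The plan is to establish optimality by invoking LP duality: show that Algorithm~\ref{alg1} terminates with a primal solution $w_{ij}$ and a dual solution $(y_i, z_j)$ that are simultaneously feasible and satisfy complementary slackness, whence weak/strong duality forces both to be optimal. The argument has three components, carried out in the following order. First, I would verify the \emph{dual feasibility invariant}: the algorithm starts at $y_i = z_j = 0$ (feasible by~(\ref{operation 12})), and each iteration increases $y_i$ by exactly $\delta_1 = \mathrm{Min}_{(i,j)\in S(y_i)}(d_{ij} - (y_i - z_j))$, which by definition of $\delta_1$ keeps every slack constraint satisfied, while the simultaneous increase of $z_j$ by $\delta_1$ on all tight pairs preserves $y_i - z_j = d_{ij}$ exactly. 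So~(\ref{operation 12}) holds throughout. Second, I would verify the \emph{primal feasibility} at termination: weights are only ever placed on tight edges, and the $\delta_2$ update~(\ref{operation 18}) caps the increment at the minimum residual capacity, so the consumer capacity constraint~(\ref{operation 7}) is never violated; the loop condition $\exists y_i : B(y_i) \ge 0$ together with the definition of $B(y_i) = R_i - \sum_{j:(i,j)\in T(y_i)} M_j$ ensures that on termination every producer has $B(y_i) < 0$, i.e.\ enough tight capacity to route $R_i$, which (combined with the $\delta_2$ fills) gives~(\ref{operation 6}).

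Third — and this is where the real content lies — I would argue \emph{complementary slackness} and hence optimality. By the feasibility invariant, $w_{ij} > 0$ is only ever set on pairs in $T(y_i)$, i.e.\ where $y_i - z_j = d_{ij}$, which is exactly one direction of~(\ref{operation 13}); the reverse direction (that a strictly tight dual constraint with $z_j > 0$ forces the capacity of $j$ to be saturated) needs the observation that $z_j$ is increased only in lockstep with filling edges incident to $j$, so $z_j > 0$ implies $j$ is bottlenecked. Once~(\ref{operation 13}) and the capacity/demand complementary slackness conditions are in hand, strong LP duality (the primal~(\ref{operation 8})--(\ref{operation 10}) and its dual~(\ref{operation 11})--(\ref{operation 12}) have equal optimal values, both finite since the instance is feasible by Theorem~1 and the objective is bounded below by $0$) implies the constructed primal is optimal. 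Finally I would note termination: each iteration either makes a new constraint tight (bounded by $|P|\cdot|C|$) or fills a consumer to capacity (bounded by $|C|$), and the benefit-driven choice guarantees monotone progress in the dual objective, so the \textbf{while} loop halts — precisely when no $y_i$ can be profitably increased, matching the theorem's phrasing ``when it is not possible to increase the potentials in the corresponding dual any further.''

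The step I expect to be the main obstacle is making the complementary-slackness bookkeeping airtight, specifically the interaction between $\delta_1$ (which can bring \emph{several} new edges tight at once) and $\delta_2$ (which fills only up to one consumer's residual capacity, possibly leaving other freshly-tight edges unfilled). One must argue that any such leftover slack is resolved in subsequent iterations before the loop exits — otherwise one could terminate with a tight dual edge carrying $w_{ij}=0$ \emph{and} an unsaturated consumer, which alone does not break complementary slackness, but leaving a producer's demand unmet would. I would handle this by showing the loop condition cannot become false while any producer still has $B(y_i) \ge 0$, so all demands are necessarily routed before termination; a secondary subtlety is ruling out unboundedness of the dual (degenerate instances where some $d_{ij}=\infty$ from a failed edge), which is dispatched by the convention in Section~\ref{sec:offline-LP} that failed edges are simply removed.
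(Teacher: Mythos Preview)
Your proposal is correct and follows essentially the same primal--dual duality argument as the paper: maintain dual feasibility throughout (via the $\delta_1$ update), ensure primal capacity feasibility (via the $\delta_2$ cap~(\ref{operation 18})), use the termination condition $B(y_i)<0$ to conclude demand feasibility, and invoke complementary slackness~(\ref{operation 13}) for optimality. Your version is in fact more careful than the paper's, which asserts the key identity~(\ref{operation 20}) without justification and does not address the multi-tight-edge / termination subtleties you flag.
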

\begin{IEEEproof} 
This algorithm always produces a dual feasible solution as the dual constraints in \emph{potential limit constraint} (\ref{operation 12}) are always satisfied by definition of Algorithm \ref{alg1}. The primal \emph{consumer capacity constraints} in (\ref{operation 10}) are always satisfied from the way we increase $w_{ij}$ from (\ref{operation 18}).
When it is not possible to increase the value of dual objective function the benefit function (\ref{operation 17}) has a negative value,
\begin{equation}
B(y_i) < 0 \implies R_i < \sum_{j: (i,j) \in T(y_i)} M_j, \; \forall i \in P
\label{operation 19}
\end{equation}
From (\ref{operation 18}), we know that,
\begin{equation}
\sum_{j: (i,j) \in T(y_i)} M_j = \sum_{j:(i,j) \in T(y_i)} w_{ij}
\label{operation 20}
\end{equation}

Using (\ref{operation 19}) and (\ref{operation 20}) we infer that the demands have been met. This means that primal constraints in (\ref{operation 9}) have been satisfied and the dual constraints are always satisfied. Thus the solution is optimal.

Complementary slackness (\ref{operation 13}) is satisfied as we only increase $w_{ij}$ when the dual constraint is tight. This means that the primal is optimal.
\end{IEEEproof}

\begin{theorem} {\bf (Complexity of Algorithm \ref{alg1})} The Primal-Dual Algorithm \ref{alg1} takes $O(n^3), \; n = |P + C|$ time to complete.
\end{theorem}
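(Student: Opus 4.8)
The plan is to write the total running time as the product of two quantities --- the number of iterations of the \textbf{while} loop of Algorithm~\ref{alg1}, and the worst-case cost of a single iteration --- and to argue that the first is $O(n^2)$ and the second is $O(n)$ (amortized), so that the product is $O(n^3)$ with $n=|P|+|C|$ (note $|P|,|C|\le n$, and indeed $|P|\cdot|C|\le n^2/4$).

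First I would account for the cost of one pass through the loop body, going line by line. Keep a cached value of $B(y_i)$ for each $i\in P$; since the $M_j$ never change, this cache needs only an $O(1)$ update whenever a consumer enters or leaves some set $T(y_i)$, so the line $y_i:Max_{B(y_i)}$ becomes an $O(|P|)$ scan. Computing $\delta_1=Min_{(i,j)\in S(y_i)}(d_{ij}-(y_i-z_j))$, incrementing by $\delta_1$ the $z_j$ with $(i,j)\in T(y_i)$, and recomputing $T(y_i)$ each touch only the $|C|$ consumers adjacent to the selected $i$, i.e.\ $O(|C|)$. Keeping a running column sum $\sigma_j=\sum_{i\in P}w_{ij}$ for every $j$ makes $\delta_2=Min_{j\in C}(M_j-\sigma_j)$ and the update $w_{ij}\gets w_{ij}+\delta_2$ cost $O(|C|)$ as well. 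Hence one iteration costs $O(|P|+|C|)=O(n)$, once the $B$-cache updates forced by tight-set changes are charged separately and amortized: over the whole run the number of consumer insertions into tight sets is at least the number of deletions, so this bookkeeping is $O(1)$ amortized per change and leaves the per-iteration bound at $O(n)$.

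Next I would bound the number of iterations. Each iteration raises $y_i$ by exactly $\delta_1$, the smallest slack in $S(y_i)$, so at least one formerly slack pair $(i,j)$ becomes tight: every iteration triggers at least one ``tightening event''. By the problem's assumption each producer issues a single demand $R_i$, and a producer stops being selected once $B(y_i)<0$, which by the optimality argument for Algorithm~\ref{alg1} means its demand is already satisfied; processing one producer's demand by dual ascent is the single-source counterpart of one phase of the Hungarian method, which performs at most $|C|$ potential adjustments. Summing over the $|P|$ producers gives $O(|P|\cdot|C|)=O(n^2)$ iterations, hence $O(n^2)\cdot O(n)=O(n^3)$ time overall.

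The main obstacle is precisely this last count. Because Algorithm~\ref{alg1} always advances the globally highest-benefit producer rather than completing one producer before moving on, raising the $z_j$ of one producer's tight consumers can push a pair $(i',j)$ of a different producer back into $S(y_{i'})$, so the tight sets are not monotone and the naive ``each pair becomes tight only once, hence $O(n^2)$ tightenings'' argument does not literally apply. I would close this with a monovariant argument --- the natural candidate being a bounded integer-valued measure (for instance the number of consumers with positive remaining capacity, or the number of producers with unmet demand, suitably combined) that strictly decreases on each non-degenerate iteration and is $O(n)$ --- and then account separately for the degenerate zero-progress steps ($\delta_2=0$ because some consumer is already saturated, or $S(y_i)=\emptyset$). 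Pinning down the right measure and verifying that it genuinely decreases under the simultaneous $y_i,z_j,w_{ij}$ updates, and that the degenerate steps cannot recur unboundedly, is the delicate part; everything else is the routine per-line accounting above.
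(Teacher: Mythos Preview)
Your decomposition is the reverse of the paper's. The paper charges each pass through the \textbf{while} loop a crude $O(|P|\cdot|C|)$ (it counts the computation of $\delta_1$ as scanning all dual constraints rather than only those incident to the selected $i$) and then asserts that the loop is executed once per producer, so the total is $O(|P|)\cdot O(|P|\cdot|C|)=O(|P|^2\,|C|)=O(n^3)$. By contrast you invest effort in caching $B(y_i)$ and column sums to drive the per-iteration cost down to $O(n)$, and then need an $O(n^2)$ bound on the number of iterations to recover the same $O(n^3)$.

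That last step is a real gap in your proposal, and you acknowledge it yourself: the monotonicity of the tight sets fails because increasing $z_j$ for the current producer can slacken a constraint $(i',j)$ for another producer, so ``each pair tightens at most once'' does not go through, and you do not actually exhibit a monovariant that works. Saying you would look for one --- perhaps the number of unsaturated consumers or unsatisfied producers --- is not a proof; indeed neither of those obviously decreases on every non-degenerate iteration of Algorithm~\ref{alg1}, and you give no argument that degenerate iterations cannot pile up. The paper's coarser accounting sidesteps this entirely: it never needs to count tightening events, because it is content with an $O(n^2)$ per-iteration bound and an $O(|P|)$ count of iterations (one per producer demand, in line with the single-demand assumption in Section~\ref{sec:problem-definition}). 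If you want to keep your $O(n)$-per-iteration analysis, you must either supply the missing monovariant or fall back to the paper's simpler product $O(|P|)\cdot O(|P|\cdot|C|)$.
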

\begin{IEEEproof} 
For each producer, it takes $O(|P|)$ comparisions to calculate the dual variable $y_i$ with the maximum unit benefit function,  $O(|P| \ cdot |C|)$ comparisions (which is equal to the number of dual constraints) to calculate $delta_1$  and $O(|C|)$ comparisions (which finds the minimum amongst all consumers) to find the $delta_2$. Thus it takes $O(|P|+|P| \cdot |C|+|C|)$ to execute the while loop in Algorithm \ref {alg1}. For $|P|$ producers it takes $O(|P|^2+|P|^2 \cdot |C|+|P| \cdot |C|) = O (||P|^2 \cdot |C|) = O(n^3), \; n = |P + C|$ comparisions.
\end{IEEEproof}

\section{Online Algorithms}
\label{sec:online-algorithms}

Online algorithms are used for solving problems where the input is received incrementally and partial decisions have to be made at each step. Competitive ratio is used to measure the performance of online algorithm as compared to the optimal offline algorithm that knows the entire input. A study of how randomization can be used to improve the competitiveness of online algorithms is presented in \cite{power-of-randomization-online}.

An $\alpha$-competitive online algorithm ALG is defined as follows with respect to an optimal offline algorithm OPT, for a problem \emph{P} where, \emph{I} is an instance of the problem,
\begin{equation}
cost(ALG(I)) \le\alpha \cdot cost(OPT(I)) + \beta, \; \forall I \in P
\label{operation 36}
\end{equation}

In equation (\ref{operation 36}), $\alpha$ is called the \emph{competitive ratio} and $\beta$ can be considered as the \emph{startup cost} of the algorithm. This paper assumes a startup cost of zero, $\beta = 0$.

\subsection{Assignment without reallocation}
\label{sec:solution-2}

\begin{algorithm}
\caption{Primal-Dual algorithm for \emph{Assignment without reallocation}}
\label{alg2}
\begin{algorithmic}
\STATE $y_i \gets 0, \forall i \in P$
\STATE $z_j \gets 0, \forall j \in C$
\STATE $T(y_i) \gets \emptyset$
\STATE $S(y_i) \gets \emptyset$
\WHILE{ producer demand}
\STATE $\delta_1 = \{ d_{ij} - (y_i - z_j)\;|\;(i,j) = Random(S(y_i)) \}$
\STATE $y_i \gets y_i + \delta_1$
\STATE $z_j \leftarrow z_j + \delta_1, \forall j: (i,j) \in T(y_i)$
\STATE $T(y_i) = \{(i,j) \; | \; y_i - z_j = d_{ij}, \forall j \in C\}$
\STATE  $\delta_2 = min_{j \in C}(M_j - \sum_{i \in P} w_{ij: (i,j) \in T(y_i)})$
\STATE $w_{ij} \gets w_{ij} + \delta_2$
\ENDWHILE
\end{algorithmic}
\end{algorithm}

The online adversary \cite{competitive-analysis} produces a sequence of demands that decreases the performance of deterministic algorithms. For a greedy algorithm that selects the minimum cost edge the online adversary produces demands in a non-decreasing order of magnitude denoted $R_{adv}$  such that, the highest demands $R_{MAX}$ are assigned to edges with the highest distances $d_{MAX}$ to maximize the value of objective \ref{operation 8}:

$R_{adv}=R_{MIN} \cdots R_{MAX}$. 

In response to the service request sequence $R_{adv}$ the Algorithm \ref{alg2} produces a random assignment of weights on edges. After selecting a random edge for weight assignment the corresponding dual variable $y_i$ is increased to $d_{ij}$ to satisfy the complementary slackness condition in (\ref{operation 13}).

Cost (\ref{operation 8}) of the solution produced by the Algorithm \ref{alg2} is,
\begin{equation}
Cost(ALG2) = \sum R_i  \cdot  E[d_{e_{ij}}]
\label{operation 21}
\end{equation}

\begin{equation}
\begin{array}{l}
\displaystyle E[d_{e_{ij}}] = \sum_{i, j=1}^{|C|} \frac{d_{e_{ij}}}{j} \\
\displaystyle \;\;\;\;\;\;\;\;\;\;\;\; \le d_{e_{MAX}} \cdot \ln |C| \;\;\;\; (\mbox{where $H_n$ is the nth Harmonic})
\label{operation 22}
\end{array}
\end{equation}

At the first iteration, Algorithm \ref{alg2} selects an edge from all the available edges, which is $n = |C|$. By definition of Algorithm \ref{alg2} after selecting an edge, the weight on the edge $w_{ij}$ is increased until the capacity of consumer j, $M_j$ is reached. So, in the next iteration, one amongst the remaining $n-1$ edges has to be selected. This process is repeated until all consumers are saturated.

Substituting (\ref{operation 22}) in (\ref{operation 21}),
\begin{equation}
Cost(ALG2) \le d_{e_{MAX}} \cdot \ln |C| \cdot \sum R_i
\label{operation 23}
\end{equation}

The optimal offline solution assigns the lower valued demands to the higher cost edges and saves the lower cost edges for the higher valued demands that arrive later in the sequence. Cost of the solution produced by the optimal offline algorithm is,

\begin{equation}
\begin{array}{l}
\displaystyle Cost(OPT) = d_{e_{MIN}}  \cdot  R_{MAX} +  \cdots + d_{e_{MAX}}  \cdot  R_{MIN} \\
\displaystyle \;\;\;\;\;\;\;\;\;\;\;\;\;\;\;\;\;\;\;\; \ge d_{e_{MIN}}  \cdot \sum R_i\\
\label{operation 26}
\end{array}
\end{equation}

From (\ref{operation 23}) and (\ref{operation 26}), the competitive ratio ((\ref{operation 36}) in \ref{sec:online-algorithms}) for Algorithm \ref{alg2} is,

\begin{equation}
\begin{array}{l}
\displaystyle \alpha \le  \frac{d_{e_{MAX}}}{d_{e_{MIN}}} \cdot \ln |C|
\end{array}
\label{operation 24}
\end{equation}

This algorithm also runs in $O(n^3)$ time similar to Algorithm \ref{alg1} as although choosing a random edge takes constant time, the running time of the algorithm is bounded by the operation that calculates the tight constraints $T(y_i)$.

\subsection{Assignment with varying edge distances}
\label{sec:solution-3}
\begin{algorithm}
\caption{Primal-Dual algorithm for \emph{Assignment with varying edge distances}}
\label{alg3}
\begin{algorithmic}
\STATE $y_i \gets 0, \forall i \in P$
\STATE $z_j \gets 0, \forall j \in C$
\STATE $T(y_i) \gets \emptyset$
\STATE $S(y_i) \gets \emptyset$
\WHILE{producer demand}
\STATE $\delta_1 = \{ d_{ij} - (y_i - z_j)\;|\;(i,j) = Random(S(y_i)) \}$
\STATE $y_i \gets y_i + \delta_1$
\IF {\mbox{edge distance changes}}
\STATE $z_j \gets z_j + \delta_1 - (\overline{d_{ij}} - d_{ij}), \forall j: (i,j) \in T(y_i)$
\ELSE
\STATE $z_j \gets z_j + \delta_1, \forall j: (i,j) \in T(y_i)$
\ENDIF
\STATE $T(y_i) = \{(i,j) \; | \; y_i - z_j = d_{ij}, \forall j \in C\}$
\STATE  $\delta_2 = Min_{j \in C}(M_j - \sum_{i \in P} w_{ij: (i,j) \in T(y_i)})$
\STATE $w_{ij} \gets w_{ij} + \delta_2$
\ENDWHILE
\end{algorithmic}
\end{algorithm}

This section only considers cases where $d_{diff} > 0$ as these cases form an upper bound on the competitive ratio. As for cases where $d_{diff} \le 0$ the cost of the primal objective \ref{operation 8} either remains same or moves towards the optimal solution.

In Algorithm \ref{alg3} the primal-dual algorithm accommodates the varying edge distances by adjusting the potentials in \emph{potential limit constraint} (\ref{operation 12}) of producers $y_i$ and consumers $z_j$ in order to maintain the tight constraints T (\ref{operation 14}). For the new constraints that become tight $\overline{T}$, the corresponding primal variables $w_{ij}$ are raised as much as possible.

The online adversary targets the edge with highest weight assignment, $Max_{w_{ij}}$ and increases its distance by a certain amount: $d_{diff} = \overline{d_{ij}} - d_{ij}$. Let the existing weight assignment be $A_{ALG3} = \{e_1 \gets R_1,  \cdots, e_n \gets R_n \}$. Say the adversary targets the edge with the highest weight assignment $Max_{w_{ij}}$ and increases the distance of this edge from $d_{ij}$ to $\overline{d_{ij}}$.

At each iteration of primal-dual Algorithm \ref{alg3} the expected increase in cost due to varying edge distances $d_{diff}$ is,
\begin{equation}
\begin{array}{l}
\displaystyle E[Cost_{ALG3}] = Max_{ALG3}(w_{e_{ij}})  \cdot  d_{diff}
\end{array}
\label{operation 25}
\end{equation}

At each iteration of OPT \ref{sec:offline-LP} the maximum increase in cost due to varying edge distances is,
\begin{equation}
\begin{array}{l}
\displaystyle E[Cost_{OPT}] = Max_{OPT}(w_{e_{ij}}) \cdot d_{diff}
\end{array}
\label{operation 28}
\end{equation}

Competitive ratio of Algorithm \ref{alg7} is the same as the competitive ratio \ref{operation 24} for \emph{assignment without reallocation} \ref{sec:problem-2} as maximum weight assignments for the Optimal, OPT and Algorithm \ref{alg3} are equal to the value of maximum demand ($Max_{ALG3}(w_{e_{ij}})  = Max_{OPT}(w_{e_{ij}}) = Max(R_k)$).

This algorithm also runs in $O(n^2)$ time similar to Algorithm \ref{alg1}.

\subsection{Assignment with node addition / failure or attribute changes}
\label{sec:solution-4}
\begin{algorithm}
\caption{Primal-Dual algorithm for \emph{Assignment with producer failures}}
\label{alg4}
\begin{algorithmic}
\STATE $y_i \gets 0, \forall i \in P$
\STATE $z_j \gets 0, \forall j \in C$
\STATE $T(y_i) \gets \emptyset$
\STATE $S(y_i) \gets \emptyset$
\WHILE{producer demand}
\STATE $\delta_1 = \{ d_{ij} - (y_i - z_j)\;|\;(i,j) = Random(S(y_i)) \}$
\STATE $y_i \gets y_i + \delta_1$
\STATE $z_j \gets z_j + \delta_1 - (\overline{d_{ij}} - d_{ij}), \forall j: (i,j) \in T(y_i)$
\STATE $T(y_i) = \{(i,j) \; | \; y_i - z_j = d_{ij}, \forall j \in C\}$
\IF {producer i fails}
\STATE  $C_i = \{j\; | \; w_{ij} > 0, \forall j \in C\}$
\FOR{$j \in C_i$}
\STATE $w_{ij} = 0, \forall j \in C_i$
\STATE $T(y_i) \gets T(y_i) \setminus (i,j)$
\ENDFOR
\ENDIF
\STATE  $\delta_2 = Min_{j \in C}(M_j - \sum_{i \in P} w_{ij: (i,j) \in T(y_i)})$
\STATE $w_{ij} \gets w_{ij} + \delta_2$
\ENDWHILE
\end{algorithmic}
\end{algorithm}

In problem \ref{sec:problem-5} the producers / consumers may go down or come up or their demands / capacities can change over time.

\subsubsection{Producer failure}
\label{sec:producer-failure}

When a producer $i \in P$ goes down, the Algorithm \ref{alg4} first calculates $C_i = \{j \; | \; w_{ij} > 0, \forall j \in C\}$. Then the available capacity $M_j$ of each consumer on which producer $i$ demands were assigned is increased by $w_{ij}$, $M_j \gets M_j + w_{ij}, \forall j \in C_i$. After this, the corresponding weight assignments are set to zero, $w_{ij} = 0, \forall j \in C_i$.

The competitive ratio for Algorithm \ref{alg4} is equal to the competitive ratio for \emph{assignment without reallocation} (\ref{operation 24}) as the operations corresponding to a producer failure do not affect the competitive ratio.

Additional steps corresponding to the failure of the producer, add $O(|C|)$ to the time complexity for checking if each consumer contains weight assignments corresponding to producer $i \in P$. Overall time complexity is $O(n^2+|C|) = O(n^2)$ as $|C| < |V| = n$.

It is to be noted that producers are added online by definition \ref{sec:problem-definition} and the demands generated by them cannot change over time.

\subsubsection{Consumer addition}
\label{sec:Consumer-addition}
\begin{algorithm}
\caption{Primal-Dual algorithm for \emph{Assignment with consumer addition}}
\label{alg5}
\begin{algorithmic}
\STATE $y_i \gets 0, \forall i \in P$
\STATE $z_j \gets 0, \forall j \in C$
\STATE $T(y_i) \gets \emptyset$
\STATE $S(y_i) \gets \emptyset$
\WHILE{producer demand}
\STATE $\delta_1 = \{ d_{ij} - (y_i - z_j)\;|\;(i,j) = Random(S(y_i)) \}$
\STATE $y_i \gets y_i + \delta_1$
\STATE $z_j \gets z_j + \delta_1 - \overline{(d_{ij}} - d_{ij}), \forall j: (i,j) \in T(y_i)$
\STATE $T(y_i) = \{(i,j) \; | \; y_i - z_j = d_{ij}, \forall j \in C\}$
\IF {consumer j is added}
\FOR{$i \in P$}
\STATE $w_{ij}: Max_{d_{ij}}, \forall  w_{ij} > 0$
\STATE $\overline{w_{ij}} = w_{ij}$
\STATE $w_{ij} = 0$
\STATE $T(y_i) \gets T(y_i) \cup \overline{(i,j)} \setminus (i,j)$
\ENDFOR
\ENDIF
\STATE  $\delta_2 = Min_{j \in C}(M_j - \sum_{i \in P} w_{ij: (i,j) \in T(y_i)})$
\STATE $w_{ij} \gets w_{ij} + \delta_2$
\ENDWHILE
\end{algorithmic}
\end{algorithm}

New consumers will bring with them a new set of edges corresponding to each producer. This can only decrease the cost of existing solution as each producer now has one more edge to choose from for weight assignment. 

If distance of this newly added edge $\overline{d_{ij}}$ is equal to or more than distances of all the existing edges for which $w_{ij} > 0$ then the weight assignment and cost remains the same. Otherwise the weight is transferred from the current edge $e_{ij}$ with the highest distance to the newly added edge $\overline{e_{ij}}$.

The competitive ratio for the primal-dual Algorithm \ref{alg5} is equal to competitive ratio (\ref{operation 24}) of \emph{assignment without reallocation} problem \ref{sec:problem-2} as the cost can only decrease due to newly added edges. For each $i \in P$ it takes constant time to compare the distance of the newly added edge to the edge on which the weight is currently assigned. This take an additional time of $O(|P|)$. The overall time complexity is $O(n^2+|P|) = O(n^2), |P| < n$.

\subsubsection{Consumer capacities decrease}
\label{sec:consumer-capacity-decrease}

\begin{algorithm}
\caption{Primal-Dual algorithm for \emph{Assignment with consumer capacity decrease}}
\label{alg6}
\begin{algorithmic}
\STATE $y_i \gets 0, \forall i \in P$
\STATE $z_j \gets 0, \forall j \in C$
\STATE $T(y_i) \gets \emptyset$
\STATE $S(y_i) \gets \emptyset$
\STATE {\bf A:}
\WHILE{Producer demand}
\STATE $\delta_1 = \{ d_{ij} - (y_i - z_j)\;|\;(i,j) = Random(S(y_i)) \}$
\STATE $y_i \gets y_i + \delta_1$
\STATE $z_j \gets z_j + \delta_1, \forall (i,j) \in T(y_i)$
\STATE $T(y_i) = \{(i,j) \; | \; y_i - z_j = d_{ij}, \forall j \in C\}$
\IF {$(\overline{M_{j}} - M_j) < 0$}
\STATE $P_j = \{i \; | \; w_{ij} > 0, \forall i \in P \}$
\STATE $i = Min(d_{ij}), \; i \in P_j$
\STATE $w_{ij} \gets w_{ij} - |\overline{M_j} - M_j|$
\STATE $w_{\overline{i}j} \gets w_{\overline{i}j} + |\overline{M_j} - M_j|, \; \overline{i}: = Min(d_{ij}), i \neq \overline{i}$
\ENDIF
\STATE  $\delta_2 = min_{j \in C}(M_j - \sum_{i \in P} w_{ij: (i,j) \in T(y_i)})$
\STATE $w_{ij} \gets w_{ij} + \delta_2$
\STATE generate demand of value $|\overline{M_j} - M_j|$ for producer i
\STATE {\bf goto} A
\ENDWHILE
\end{algorithmic}
\end{algorithm}

The primal-dual Algorithm \ref{alg6} selects the producer with the minimum distance edge ($i = Min(d_{ij}), \; i \in P_j$), decreases the weight assigned on edge $d_{ij}$ by the consumer's residual weight $M_{res}$ and assigns $M_{res}$ on the edge with the next lowest distance. The increase in cost of the dual objective due to this operation is:

\begin{equation}
\displaystyle E[Cost_{ALG6}] = E[Cost_{OPT}] = M_{Res} \cdot (d_{\overline{i}j} - d_{ij})
\label{operation 30}
\end{equation}

The Competitive ratio of Algorithm \ref{alg6} is the same as the competitive ratio \ref{operation 24} for \emph{assignment without reallocation} \ref{sec:problem-2} as the optimal algorithm does the same in case of consumer capacity decrease.

Time complexity of Algorithm \ref{alg6} is the time required to service external demands, time required to find the producers in $P_j$ and the time to service internal demands $O(n^2_{external} + n^2_{internal} +|P|) = O(n^2), |P| <n$

\subsubsection{Consumer failure}
\label{sec:consumer-failure}
\begin{algorithm}
\caption{Primal-Dual algorithm for \emph{Assignment with consumer failures}}
\label{alg7}
\begin{algorithmic}
\STATE $y_i \gets 0, \forall i \in P$
\STATE $z_j \gets 0, \forall j \in C$
\STATE $T(y_i) \gets \emptyset$
\STATE $S(y_i) \gets \emptyset$
\STATE {\bf A:}
\WHILE{producer demand}
\STATE $\delta_1 = \{ d_{ij} - (y_i - z_j)\;|\;(i,j) = Random(S(y_i)) \}$
\STATE $y_i \gets y_i + \delta_1$
\STATE $z_j \gets z_j + \delta_1 - (\overline{d_{ij}} - d_{ij}), \forall (i,j) \in T(y_i)$
\STATE $T(y_i) = \{(i,j) \; | \; y_i - z_j = d_{ij}, \forall j \in C\}$
\IF {consumer j fails}
\STATE $P_j = \{i \; | \; w_{ij} > 0, \forall i \in P \}$
\FOR{$i \in P_j$}
\STATE generate demand of value $w_{ij}$ for producer i
\STATE {\bf goto} A
\ENDFOR
\ENDIF
\STATE  $\delta_2 = Min_{j \in C}(M_j - \sum_{i \in P} w_{ij: (i,j) \in T(y_i)})$
\STATE $w_{ij} \gets w_{ij} + \delta_2$
\ENDWHILE
\end{algorithmic}
\end{algorithm}

Consumer failure can invalidate a current assignment. In this case the weight assignments corresponding to the producers that have weights allocated on the failed consumer $j$, $P_j = \{i \; | \; w_{ij} > 0, \forall i \in P \}$ are invalidated. This generates internal \emph{residual demand} corresponding to each producer $i \in P_j$. This is handled as a regular demand by the Algorithm \ref{alg7}.

The cost of overall solution may increase or decrease depending upon the edge selected for reassignment of weights. If the distance of edge selected is higher than the distance of the edge on which the weight was assigned initially, $\overline{d_{ij}} - d_{ij} > 0$ then the cost increases. If $\overline{d_{ij}} - d_{ij} < 0$ the cost decreases. The cost remains the same if $\overline{d_{ij}} = d_{ij}$. 

Optimal algorithm will choose the edge with minimum cost available for assigning residual weights. This edge can be found by going through all available edges $|C|$ for each producer. Thus the Competitive ratio of Algorithm \ref{alg7} is the same as the competitive ratio \ref{operation 24} for \emph{assignment without reallocation} \ref{sec:problem-2}.

Time complexity of Algorithm \ref{alg7} is equal to that of \ref{sec:consumer-capacity-decrease} as this is a special case of \emph{assignment with consumer capacity decrease} \ref{sec:consumer-capacity-decrease} where consumer capacity is set to zero.

\subsection{Offline Assignment with multiple producer requests}
\label{sec:problem-6}
The primal constraint \ref{operation 9} is extended as follows

\begin{equation}
\sum_{j \in C} w_{ij}(t) \ge \sum_{t} R_i(t), \forall i \in P, \; t \in T
\label{operation 31}
\end{equation}

The unit benefit function(UBF) \ref{operation 17}  is now calculated for each producer demand as follows:

\begin{equation}
\displaystyle B(y_{it}) = R_{it} - \sum_{j: (i,j) \in T(y_i)} M_j
\label{operation 32}
\end{equation}

Instead of increasing $y_i$ (dual variable corresponding to the producer $i \in P$) by the entire amount needed to make it tight ($\delta_1 = d_{ij} - (y_i - z_j)$), we only increase by the amount proportional to its share in the total producer demand ($\delta_1 *(R_{it}/\sum_{t}R{it})$). This ensures that the edge $e_{ij}$ does not become tight until the producer $j \in C$ is saturated.

\begin{algorithm}
\caption{Primal-Dual algorithm for \emph{Offline Assignment with multiple producer demands}}
\label{alg8}
\begin{algorithmic}
\STATE $y_i \gets 0, \forall i \in P$
\STATE $z_j \gets 0, \forall j \in C$
\STATE $T(y_i) \gets \emptyset$
\STATE $S(y_i) \gets \emptyset$
\WHILE{ $\exists y_i: B(y_{it}) \geq 0$}
\STATE $y_i: Max_{B(y_{it})}$
\STATE $\delta_1 = \{ Min_{d_{ij} - (y_i - z_j)} \;| \; (i,j) \in S(y_i) \}$
\STATE $y_i \gets y_i + \delta_1 *(R_{it}/(M_j - \sum_{(i,j) \in T} w_{ij})$
\STATE $z_j \leftarrow z_j + \delta_1, \forall j: (i,j) \in T(y_i)$
\STATE $T(y_i) = \{(i,j) \; | \; y_i - z_j = d_{ij}, \forall j \in C\}$
\STATE  $\delta_2 = Min_{j \in C}(M_j - \sum_{i \in P} w_{ij: (i,j) \in T(y_i)})$
\STATE $w_{ij} \gets w_{ij} + \delta_2$
\ENDWHILE
\end{algorithmic}
\end{algorithm}

Note that we assume the consumers gets saturated exactly although there could be a producer demand that can only be partly allocated on a given consumer and the remaining part has to be allocated on a different consumer.

\section{Conclusion}
\label{sec:conclusion}
Variants of the online assignment problem defined in section \ref{sec:problem-definition} can be solved using efficient primal-dual algorithms. Implementing this theoretical approach will improve the performance of automated storage management schemes used in distributed systems.

\section{Acknowledgement}
\label{sec:acknowledgement}
The author would like to thank Prof. Manindra Agrawal and Prof. Sumit Ganguly at the Department of Computer Science and Engineering at the Indian Institute of Technology (IIT), Kanpur for their valuable suggestions and guidance.

\end{document}